\newtheorem{theorem}{Theorem}
\newcommand{\ltl}{LTL\xspace}
\newcommand{\etl}{ETL\xspace}
\newcommand{\qltl}{QLTL\xspace}
\newcommand{\rltl}{RLTL\xspace}
\newcommand{\mutl}{$\mu$TL\xspace}
\newcommand{\buechi}{B\"uchi\xspace}
\newcommand{\op}[1]{\mathsf{#1}}
\newcommand{\sig}[1]{\mathcal{#1}}
\newcommand{\lang}{\mathscr{L}}
\newenvironment{proof}{\noindent \textsc{Proof}.}{\hfill$\square$}
\begin{document}

\title{A Short Note on Infinite Union/Intersection of Omega Regular Languages
\thanks{The author would thank Normann Decker, Daniel Thoma, Fu Song, Lei Song \& Lijun Zhang for
the fruitful discussions on this problem.}}
\author{Wanwei Liu \\
{\small School of Computer Science} \\
{\small National University of Defense Technology} \\
{\small Changsha, China, 410073}}
\date{}
\maketitle


\section{A Basic Observation}
\label{sec:observation}

The most impressive non-star-free property,
first pointed out by Wolper \cite{Wol83}, ``$p$ holds at every even moment''
(we in what follows refer to it as $\op{P}(2)$)
cannot be expressed by any \ltl \cite{Pnu77} formula.
As a consequence, numerous extensions or \ltl have been presented,
such as \etl \cite{VW94}, \qltl \cite{SVW87}, \rltl \cite{LS07},
linear-time \mutl \cite{BB87} etc,
and all of them are known to be as expressive as (nondeterministic)
\buechi automata \cite{Buc62}, alternatively, $\omega$-regular languages.

Indeed, $\lang(\op{P}(2))=\bigcap_{k\in\mathbb{N}}\lang(\op{X}^{2k}p)$,
and this indicates that star-free languages are not closed under infinite intersection.
It naturally enlightens us to make one step ahead, and now the question of
interest is:
\begin{quote}
  ``Are $\omega$-regular languages closed under infinite union/intersection?''
\end{quote}
For this, we just consider the language $\bigcup_{k\in\mathbb{N}}\lang(\op{P}(k))$,
which consists of all $\omega$-words along which $p$ holds periodically.

\begin{theorem}
\label{thm:non-close}
  The language $\bigcup_{k\in\mathbb{N}}\lang(\op{P}(k))$ is not $\omega$-regular.
\end{theorem}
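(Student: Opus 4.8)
The plan is to argue by contradiction using a pumping argument on a B\"uchi automaton. Suppose $L := \bigcup_{k\in\mathbb{N}}\lang(\op{P}(k))$ were $\omega$-regular; then it would be recognized by some nondeterministic \buechi automaton $\mathcal{A}$ with a finite number $N$ of states. Writing $1$ for a position at which $p$ holds and $0$ for one at which it does not, I would fix any $q>N$ and take as witness the ultimately periodic word $w=(1\,0^{q-1})^\omega$. Since $w$ carries a $1$ at every multiple of $q$, it lies in $\lang(\op{P}(q))\subseteq L$, so $\mathcal{A}$ has an accepting run $\rho$ on $w$.

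The next step is to pump inside the first block. Consider the $q$ states that $\mathcal{A}$ occupies immediately after reading the leading $1$ and after each of the $q-1$ subsequent zeros; since $q>N$, pigeonhole yields two of them, say at cut-points $s<t$, that coincide. All transitions strictly between these two points read zeros, so splicing the run at this repeated state deletes $d:=t-s$ zeros with $1\le d\le q-1$ and yields an accepting run on $w'=1\,0^{\,q-1-d}(1\,0^{q-1})^\omega$. Acceptance is preserved because only a finite segment is removed and the unchanged tail of $\rho$ still visits an accepting state infinitely often, so $w'\in L(\mathcal{A})=L$.

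The decisive step, and the one I expect to be the real obstacle, is to verify that $w'$ is \emph{not} in $L$, i.e. that excising those $d$ zeros destroys \emph{every} period at once rather than merely the original period $q$. Here one reads off that the $1$-positions of $w'$ are exactly $\{0\}\cup\{jq-d : j\ge 1\}$. If some period $k$ witnessed $w'\in L$, then $k$ and $2k$ would both be nonzero $1$-positions, say $k=aq-d$ and $2k=bq-d$; eliminating $k$ forces $(2a-b)\,q=d$, which is impossible because $0<d<q$. Hence no period survives, $w'\notin L$, and this contradicts $w'\in L$, so no such $\mathcal{A}$ exists.

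I would close by observing that the witness uses no primality or delicate number theory, only the pigeonhole bound $0<d<q$, which makes the argument robust. The one point genuinely deserving care is the bookkeeping that the repeated state is located among the zeros, so that pumping removes $0$'s only and leaves the block structure of the tail intact; this is precisely what justifies the clean description of the $1$-positions of $w'$ on which the final arithmetic contradiction rests.
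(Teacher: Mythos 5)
Your proof is correct, and it takes a genuinely different route from the paper's. The paper pumps \emph{up}: starting from $(p\cdot\neg p^{m})^\omega$ it finds a loop inside each successive block of $\neg p$'s, inserts longer and longer repetitions to get words with gaps $L_1<L_2<\cdots$, and then passes to the limit word $w_\infty$, which has unbounded gaps and hence no period, while asserting that $\mathcal{A}$ still accepts the limit. That last step --- that acceptance survives the passage to an infinite sequence of modified words and runs --- is the delicate part of the paper's argument and is only sketched there. You pump \emph{down} instead: a single excision of $d$ zeros with $0<d<q$ from the first block produces $w'=1\,0^{q-1-d}(1\,0^{q-1})^\omega$, which $\mathcal{A}$ accepts for purely finitary reasons (only a finite segment of the run is removed and the accepting tail is untouched), and the congruence computation $k=aq-d$, $2k=bq-d$, hence $(2a-b)q=d$, rules out every period $k\geq 1$ at once because no multiple of $q$ lies strictly between $0$ and $q$. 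This reaches the contradiction after one pumping step, with no limit construction and no convergence issue; the price is the small arithmetic verification on the $1$-positions of $w'$, which you carry out correctly, and your bookkeeping that the repeated state occurs among the states reached after the zeros of the first block (so only zeros are deleted and the block structure of the tail is preserved) is also right. Both arguments are valid; yours is the more elementary and the more robust of the two.
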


\begin{proof}
Assume by contradiction that this
language is $\omega$-regular, then there is a nondeterministic \buechi
automaton $\mathcal{A}$  precisely recognizing it.
Namely, each $\omega$-word being of the form $(p;\neg p^k)^\omega$
must belong to $\lang(\mathcal{A})$, where $k\in\mathbb{N}$.
\par W.o.l.g., suppose that $\mathcal{A}$ has $n$ states,
and let us fix some $m>n+1$, then $\mathcal{A}$ has an accepting run over the
word $w = (p;\neg p^m)^\omega$, say $\sigma(0),\sigma(1),\ldots$
\par
From the Pumping lemma, for each $t\in\mathbb{N}$, there exists a pair $(i,j)$, s.t. $0<i<j<m+1$,
and $\sigma(t\times (m+1)+i)=\sigma(t\times (m+1)+j)$.
This implies that, for each $\ell$, the word
$$
[(p\cdot\neg p^{i-1})\cdot
(\neg p)^{(j-i)\times\ell}\cdot(\neg p)^{m-j}]\cdot(p\cdot\neg p^m)^\omega$$
is also acceptable by $\mathcal{A}$ with the run
$$\sigma(0),\sigma(1),\ldots,\sigma(i),
\underbrace{\sigma(i+1),\ldots,\sigma(j)=\sigma(i),
\ldots,
\sigma(i+1),\ldots,\sigma(j)}_{\ell \textrm{ times}}
,\sigma(j+1),\ldots$$
which is definitely accepting.
\par
Likewise and stepwise, we may  obtain a sequence of omega words as followings:
\begin{itemize}
\item[-] $w_0=(p\cdot\neg p^m)^\omega$.
\item[-] $w_1=(p\cdot \neg p^{L_1})\cdot(p\cdot\neg p^m)^\omega$.
\item[-] $w_2=(p\cdot \neg p^{L_1})\cdot(p\cdot \neg p^{L_2})\cdot(p\cdot\neg p^m)^\omega$.
\item[-] $\cdots$
\end{itemize}
where $L_1<L_2<L_3<\cdots$
\par
Then, $\mathcal{A}$ also has an accepting run on the limit of the sequence
$$w_\infty=(p\cdot\neg p^{L_1})\cdot(p\cdot\neg p^{L_2})\cdot(p\cdot\neg p^{L_3})\cdot\ldots.$$
thus we can conclude that $w_\infty\in\lang(\mathcal{A})$.
\par
However, $w_\infty$ could not have a ``period'' on $p$ |
because, for every number $k$,
there must exist some $L_c>k$ | this implies  that
the distance between two adjacent occurrences of $p$
will be larger than $k$ in the future.
\par
Thus, we have got a contradiction\footnote{Remind that an $\omega$-word
$w\in\lang(\mathcal{A})$ iff $p$ rises periodically along $w$.},
and it lies from the assumption that $\bigcup_{k\in\mathbb{N}}\lang(\op{P}(k))$
is regular.
\end{proof}

Observe that each $\lang(\op{P}(k))$ is regular, but it is not the case for the union of all such languages,
and hence regular languages are not closed under infinite union and/or intersection.

\section{Adding Step Variables and Quantifiers?}
\label{sec:quantifier}

As a possible solution of expressing the aforementioned property, one may orthogonally add
\emph{step variables} as well \emph{step quantifiers} in temporal logics involving next operator ($\op{X}$).
In a (closed) formula, a step variable is introduced by a quantifier and appears associated with a next operator.

Syntax and semantics of such kind of extensions can be naturally and
succinctly obtained w.r.t. the underlying logics.
As an example, let $\sig{P}$ be the set of propositions, and
let $\sig{K}$ be the collection of all step variables,
formulae (in PNF, ranging over $f$, $g$, etc) of \ltl with such features can be
described by the following abstract grammar.
$$f \Coloneqq \top\mid \bot \mid p \mid \neg p \mid f\wedge f \mid f\vee f
    \mid \op{X} f \mid \op{X}^k f \mid f\op{U}f \mid f\op{R}f \mid \exists k. f \mid \forall k. f$$
where $p\in\sig{P}$ and $k\in\sig{K}$.
The \emph{satisfaction} ($\models$) of a formula can be defined w.r.t. an $\omega$-word $\pi\in(2^\sig{P})^\omega$,
a position $i\in\mathbb{N}$ and a valuation $v :\sig{K}\to\mathbb{N}$.
Most cases are defined as routine, and
\begin{itemize}
  \item $\pi,i,v\models\op{X}^k f$ iff $\pi,i+v(k),v\models f$.
  \item $\pi,i,v\models\exists k. f$ (resp. $\pi,i,v\models\forall k.f$) iff $\pi,i,v[k/n]\models f$ for some
    (resp. for every) $n\in\mathbb{N}$.
\end{itemize}

One can, of course, choose linear \mutl as the base logic\footnote{
Note that to define semantics of such an extension, another
valuation from $\sig{Z}$ to $2^\mathbb{N}$ is also required,
where $\sig{Z}$ is the set consisting of all predicate variables.}.
Henceforth, we obtain formulae like
$$\forall k. \op{X}^k\op{X}^k p
\qquad \textrm{and}\qquad
\exists k. \nu Z.(p\wedge\op{X}^k Z).$$
Actually, the former is just $\op{P}(2)$, and the latter precisely describes
the property ``$p$ occurs periodically'' | which is not an $\omega$-regular property.

\section{On Decidability of Such Extensions}
\label{sec:decision}

Although adding step variables and step quantifiers to logics seems to be a natural and succinct
solution, we in this section reveal an inadequate feature of this mechanism
| the \textsc{satisfiability} problem, even if for the ``core fragment'' given by
$$f\Coloneqq\bot\mid\top\mid p\mid\neg p\mid f\wedge f\mid f\vee f \mid \op{X}f
\mid\op{X}^k f\mid\exists k.f \mid \forall k. f$$
is not decidable!

But before giving the proof, let us define some syntactic sugars:
\begin{itemize}
  \item We respectively abbreviate $\underbrace{\op{X}\ldots\op{X}f}_{n\textrm{ times}}$
  and $\underbrace{\op{X}^k\ldots\op{X}^kf}_{n\textrm{ times}}$ as $\op{X}^n f$ and
  $\op{X}^{n\cdot k}f$, where $n\in\mathbb{N}$ and $k\in\sig{K}$.
  \item We sometimes directly write $\op{X}^{t_1}\op{X}^{t_2}f$ as $\op{X}^{t_1+t_2}f$, provided
  that each $t_i$ is of the form $(\sum_{j}n_j\cdot k_j)+n$, where $n$ and each $n_j$ are natural numbers
  and each each $k_i\in\sig{K}$.
\end{itemize}

Note that in this setting, both the addition ($+$) and the multiplication ($\cdot$)
are communicative and associative. Meanwhile, ``$\cdot$'' is distributive w.r.t. ``$+$'',
namely, $t_1\cdot t_2 + t_1\cdot t_3$ can be rewritten as $t_1\cdot(t_2+t_3)$.

Moreover, for convenience, when $f$ is a formula involving no free variable, we directly
write $\pi,i,v\models f$ as $\pi,i\models f$. Further, we write $\pi\models f$ (resp. $\pi,v\models f$)
in place of $\pi,i\models f$ (resp. $\pi,i,v\models f$) in the case of $i=0$.

\begin{theorem}
  \label{thm:undecidability}
  The \textsc{satisfiability} problem of the core logic is not decidable.
\end{theorem}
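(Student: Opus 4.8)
The plan is to prove undecidability by reduction from Hilbert's Tenth Problem: given a polynomial $P(x_1,\dots,x_n)$ with integer coefficients, I would construct a closed core formula $\phi_P$ that is satisfiable if and only if $P(x_1,\dots,x_n)=0$ has a solution in $\mathbb{N}^n$. The guiding observation is that choosing a model $\pi\in(2^{\sig{P}})^\omega$ for a closed core formula is nothing but choosing one subset $S_p=\{i : p\in\pi(i)\}\subseteq\mathbb{N}$ per proposition $p$, while the step quantifiers range over $\mathbb{N}$ and each atom $\op{X}^t p$ (read at position $0$) asserts exactly ``$\mathrm{val}(t)\in S_p$'', where $t$ is a linear term $\sum_j n_j\cdot k_j+n$. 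So satisfiability of $\phi_P$ reduces to the existence of finitely many subsets of $\mathbb{N}$ meeting a first-order condition whose only atoms are memberships of linear terms in those subsets, with $x_1,\dots,x_n$ themselves supplied by leading existential step quantifiers. A decision procedure for the core logic would thereby decide Hilbert's Tenth Problem, which is impossible.

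The engine of the reduction is a small library of gadgets that pin an auxiliary proposition to a prescribed set, thereby importing arithmetic that the bare term language lacks. First, $p\wedge\forall k.\,\neg\op{X}^{k+1}p$ forces $S_p=\{0\}$, so that $\op{X}^t p$ becomes a \emph{zero-test} ``$\mathrm{val}(t)=0$''. Next, $p\wedge\bigwedge_{0<i<c}\neg\op{X}^{i}p\wedge\forall k.(\op{X}^{k}p\leftrightarrow\op{X}^{k+c}p)$ forces $S_p$ to be the multiples of the constant $c$, yielding every \emph{congruence test} modulo $c$. The point is that these gadgets never equate two independent step variables; they only relate a position to its fixed-offset and scalar-multiple neighbours, which is precisely what the term language permits (and which the sugars for $\op{X}^{t_1+t_2}$ and $\op{X}^{n\cdot k}$ make available).

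The heart of the construction, and the step I expect to be the main obstacle, is synthesizing genuine multiplication of two step variables together with an order on their values, since the native terms are only \emph{linear} (addition plus multiplication by constants) and the logic has \emph{no primitive equality or order} on term values. The plan is to force a proposition $q$ to denote the set of perfect squares and then obtain products through the additive identity $(x+y)^2=x^2+2\cdot x\cdot y+y^2$. Pinning $q$ to be exactly the squares, however, itself needs a comparison of positions (to speak of ``the square immediately following a given one'', exploiting that consecutive squares differ by consecutive odd numbers), so I would first manufacture a comparison gadget, for instance by forcing an auxiliary proposition to be upward closed via $(\exists k.\,\op{X}^{k}p)\wedge\forall k.(\op{X}^{k}p\rightarrow\op{X}^{k+1}p)$ and using its threshold as an origin for ``$\geq$''. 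Making squaring, and hence multiplication and $=$, \emph{exactly} definable on unbounded step variables is the delicate part; as a more robust fallback I would instead reduce from the halting problem of two-counter machines, encoding a configuration multiplicatively as $2^{c_1}3^{c_2}$, so that increment, decrement and zero-test become multiplication by a constant, division by a constant (expressible by reading $\op{X}^{2k}$ back as $\op{X}^{k}$, with \emph{no} equality), and a congruence test, all of which live squarely inside the expressible fragment.

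Once products and comparisons are in hand, translating $P=0$ is routine: introduce an existentially quantified step variable for each $x_i$ and for each subterm of $P$, use the product, zero, and comparison gadgets to assert that the subterms compose correctly and that the top-level value equals $0$, and conjoin the gadget formulae fixing the auxiliary propositions. The resulting $\phi_P$ is a formula of the core logic, satisfiable precisely when $P$ has a root in $\mathbb{N}^n$, which completes the reduction and establishes Theorem~\ref{thm:undecidability}.
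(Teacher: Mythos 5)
Your high-level strategy (reduce an undecidable arithmetic problem to satisfiability by forcing auxiliary propositions to denote arithmetically meaningful sets, then recover multiplication from squaring via $(x+y)^2=x^2+2xy+y^2$) is exactly the route the paper takes: it encodes first-order arithmetic, pins a proposition $q$ to the set of perfect squares using the consecutive-odd-differences property, and extracts $\times$ from the same binomial identity. But there is a genuine gap at precisely the point you flag as ``the delicate part'', and none of the gadgets you propose closes it: you never obtain a formula expressing $v(k_1)=v(k_2)$ for two unbounded step variables, and everything downstream (the definition of the squares, the product identity, the final assertion $P=0$) needs it. Your zero-test set $S_p=\{0\}$ only lets you assert that a nonnegative linear combination of variables vanishes, i.e., that all variables occurring in it are zero; your congruence sets $c\mathbb{N}$ give $v(k_1)\equiv v(k_2)\pmod{c}$ for finitely many fixed $c$ only, which by the Chinese remainder theorem cannot pin down equality of unbounded values; and your upward-closed set $[m,\infty)$ only compares a variable against its own model-chosen threshold $m$ --- to characterize $v(k_1)<v(k_2)$ as ``some upward-closed set contains $v(k_2)$ but not $v(k_1)$'' you would need to quantify over sets, which the logic cannot do, and a single fixed witness set cannot serve all instantiations once the comparison sits under universal step quantifiers (as it must in any axiomatization of the squares). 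The same missing equality undermines the two-counter fallback as well: chaining configuration codes by ``$v(k')=2\cdot v(k)$'' is an equality between the values of two terms, not a term operation.

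The paper's fix is a single extra gadget you are missing: a \emph{non-shifting} proposition $p$, axiomatized by $L_p\triangleq\forall k_1.\forall k_2.\neg\forall k_3.(\op{X}^{k_1+k_3}p\leftrightarrow\op{X}^{k_1+k_2+k_3+1}p)$, which forces all forward translates of $S_p$ to be pairwise distinct. Relative to such a $p$, equality becomes definable as $L_=(k_1,k_2)\triangleq L_p\wedge\forall k.(\op{X}^{k_1+k}p\leftrightarrow\op{X}^{k_2+k}p)$, order as $L_<(k_1,k_2)\triangleq\exists k.\,L_=(k_1+k+1,k_2)$, and addition comes for free from the term syntax; from there your squares-and-binomial plan goes through essentially as the paper carries it out. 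Without $L_p$ (or an equivalent equality gadget), the reduction does not get off the ground.
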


\begin{proof}
  The main observation is that ``each formula of Peano arithmetic\footnote
  {Peano arithmetic is just a fragment of first order logic, with the signature consisting of naturals,
  the function $+$, $\times$ (respectively be interpreted as addition and
  multiplication), and the predicate $<$ (whose canonical interpretation is ``less than'').
  cf. \cite{Pea89}.} has a peer expression in
  this fragment, and they are of the same satisfiability''.
  \par
  To show this, we need to build the following predicates:
  \begin{enumerate}
    \item Fix a proposition $p\in\sig{P}$, and let
	 $L_p \triangleq\forall k_1.\forall k_2. \neg\forall k_3.
	(\mathsf{X}^{k_1+k_3}p \leftrightarrow\mathsf{X}^{k_1+k_2+k_3+1} p)$.
    \par	
	Actually, $L_p$ just depicts the ``\emph{non-shifting property}'' of $p$. i.e.,
	if $\pi\models L_p$ then
	for each $i,j\in\mathbb{N}$ with $i< j$, there is some $t$ having: either
	``$\pi,i+t\models p$ and $\pi,j+t\not\models p$'' or
	``$\pi,i+t\not\models p$ and $\pi,j+t\models p$''.
	(Just view $k_1$ as $i$ and view $k_1+k_2+1$ as $j$.)
    \item Let $L_=(k_1,k_2) \triangleq L_p\wedge
	\forall k.(\mathsf{X}^{k_1+k}p\leftrightarrow\mathsf{X}^{k_2+k}p)$.
    \par
	Hence, $\pi,i,v\models L_=(k_1,k_2)$ iff $v(k_1)=v(k_2)$.
    \item Let $L_<(k_1,k_2)\triangleq\exists k. L_=(k_1+k+1,k_2)$.
    \par
	Clearly, $\pi,i,v\models L_<(k_1,k_2)$ iff $v(k_1)<v(k_2)$.
    \item Subsequently, we use $L_+(k_1,k_2,k_3)$ to denote $L_=(k_1+k_2,k_3)$.
    \par
    According to the definition, $\pi,i,v\models L_+(k_1,k_2,k_3)$ iff
    $v(k_1+k_2)=v(k_3)$.
    \item Now, let us fix another proposition $q\in\sig{P}$ and define
	$$\begin{array}{rcl}
	L_q  & \triangleq  & q~\wedge~\mathsf{X}q~\wedge~\forall k_1.\exists k_2.\mathsf{X}^{k_1+k_2} q~\wedge \\
		&		&	\forall k_1.\forall k_2. \forall k_3. (\mathsf{X}^{k_1}q
		\wedge\mathsf{X}^{k_2} q\wedge\mathsf{X}^{k_3}q \\
		&		& \wedge L_<(k_1,k_2)\wedge L_<(k_2,k_3) \\
		&		& \wedge \forall k_4.(L_<(k_1,k_4)\wedge L_<(k_4,k_2)\vee L_<(k_2,k_4)\wedge L_<(k_4,k_3)\rightarrow\neg\mathsf{X}^{k_4}q)\\
		&		& \rightarrow\exists k_5.\exists k_6. (L_+(k_5,k_1,k_2)\wedge L_+(k_6,k_2,k_3)\wedge L_+(2,k_5,k_6)))
	\end{array}$$
    \par
	We may assert that $\pi,i\models L_q$ iff $i$ is a complete square number (i.e., $i=j^2$ for some $j$).
	Let us explain : The first line indicates that $q$ holds infinitely often, and it holds at the positions of $0$ and $1$.
	For every three adjacent positions $k_1$, $k_2$, $k_3$ at which $q$ holds (hence, $q$ does not hold between
	$k_1$ and $k_2$, nor between $k_2$ and $k_3$), we have $|k_3-k_2| = 2+ |k_2-k_1|$.
	Inductively, we can show that $q$ becomes true only at $0$, $1$, $4$, \ldots, $(n-1)^2$, $n^2$, $(n+1)^2$, \ldots.
    (The encoding of $L_q$ is enlightened by \cite{Sch10}.)
    \item We let $$L_2(k_1,k_2)\triangleq L_q\wedge\mathsf{X}^{k_2}q\wedge\mathsf{X}^{k_2+2\cdot k_1+1}q
	\wedge\neg\exists k_3.(L_<(k_2,k_3)\wedge L_<(k_3, 2\cdot k_1+k_2+1)\wedge\mathsf{X}^{k_3}q)$$
	then we have that $\pi,v\models L_2(k_1,k_2)$ iff $v(k_2)=(v(k_1))^2$.
    \item As the last step, we  define that
	$$\begin{array}{rcl}
	L_\times(k_1,k_2,k_3) & \triangleq & \exists k_4.\exists k_5.\exists k_6. ( L_2(k_1,k_4)\wedge L_2(k_2,k_5)\\
			&	&\wedge L_2(k_1+k_2,k_6)\wedge L_=(k_4+k_5+2\cdot k_3, k_6))
	\end{array}$$
	Then in the case of $\pi,i,v\models L_\times(k_1,k_2,k_3)$,  we
	may get the following constraints:
	$$\left\{\begin{array}{rcl}
	v(k_4) & = & (v(k_1))^2 \\
	v(k_5) & = & (v(k_2))^2 \\
	v(k_6) & = & (v(k_1)+v(k_2))^2 \\
	v(k_6) &= & v(k_4)+v(k_5)+2\times v(k_3)
	\end{array}\right.$$
	and we subsequently have $v(k_3)=v(k_1)\times v(k_2)$.
  \end{enumerate}
  Now, we can see that ``addition'', ``multiplication'', and the ``less than'' relation over natural numbers can
  be encoded in terms of the core logic. Since quantifiers are also involved here, then the \textsc{satisfiability}
  problem of Peano arithmetic can be reduced to that of the core logic | the former is known to be undecidable
 (cf. \cite{God31Deu,Chu36}).
\end{proof}

\section{Further Discussions}
\label{sec:discussion}

As we have seen, to gain the expressiveness of infinite union/intersection of
(a family of) regular languages, an admissible approach is to cooperate with
step variables and quantifiers in the logic | however, it suffers from the
undecidability of \textsc{satisfiability}.

To tackle this, we need to investigate new mechanisms | it should both
enhance the expressiveness and keep the logic decidable.
So far, we are not aware of it, and it seems that employing more powerful existing
automata, say pushdown automata, is also not feasible.


\end{document}